\documentclass[11pt]{article}

\usepackage[margin=1in]{geometry}
\usepackage{amsmath,amssymb,amsthm}
\usepackage{enumitem}
\usepackage{algpseudocode}
\usepackage{microtype}
\usepackage[hidelinks]{hyperref}

\newtheorem{theorem}{Theorem}
\newtheorem{lemma}[theorem]{Lemma}

\newcommand{\Z}{\mathbb Z}
\newcommand{\F}{\mathbb F}
\newcommand{\AP}{\operatorname{AP}}
\newcommand{\Merge}{\operatorname{merge}}
\newcommand{\Alg}[1]{\mbox{\textnormal{\textsc{#1}}}}
\newcommand{\Owner}{\mathsf{owner}}
\newcommand{\Dir}{\mathsf{dir}}

\newcounter{algorithm}
\newenvironment{algo}[1]{%
  \refstepcounter{algorithm}\par\medskip
  \noindent\begin{minipage}{\linewidth}\small
  \hrule\smallskip
  \textbf{Algorithm \thealgorithm:} #1\par\smallskip\hrule\smallskip
}{%
  \smallskip\hrule
  \end{minipage}\par\medskip
}

\title{A Linear-Time Algorithm for the Erd\H{o}s--Ginzburg--Ziv
Theorem}
\author{Sunghyeon Jo}
\date{}

\begin{document}
\maketitle

\begin{abstract}
The Erd\H{o}s--Ginzburg--Ziv theorem states that every sequence of
$2n-1$ integers contains a subsequence of length $n$ whose sum is
divisible by $n$.  We give a deterministic algorithm that finds such a
subsequence in $O(n)$ time and $O(n)$ space on a word RAM.  The main
ingredient is an algorithm for the following problem: given $p-1$
nonzero elements of $\F_p$ and a target, find a subset whose sum is the
target.  Reachable sums are encoded by components representing
translated arithmetic progressions.  A table detects the first
collision between a temporary component and an active component.  The
positive collision coefficients are coprime and
allow the two components to be merged into a longer one.  The total
increase in component lengths bounds all table operations and calls to
the Euclidean algorithm.  A reconstruction forest recovers the
required subset, and the product reduction in the original proof
extends the result from prime to arbitrary moduli.
\end{abstract}

\section{Introduction}

Let $A=(a_1,\ldots,a_{2n-1})$ be a sequence of integers.  The theorem
of Erd\H{o}s, Ginzburg, and Ziv~\cite{EGZ1961} guarantees indices
$I\subseteq \{1,\ldots,2n-1\}$ such that
\[
        |I|=n,\qquad \sum_{i\in I} a_i \equiv 0 \pmod n .
\]
The constructive Erd\H{o}s--Ginzburg--Ziv (EGZ) problem is to find
$I$.  Del Lungo, Marini, and Mori
gave a polynomial-time constructive algorithm for finite abelian
groups~\cite{DMM2009}.  Choi, Kang, and Lim~\cite{CKL2022} obtained an
$O(n\log n)$ algorithm for constructive EGZ, and
Leung~\cite{Leung2025} improved the bound to
$O(n\log\log\log n)$.  We prove the following.

\begin{theorem}\label{thm:main}
On a word RAM with $\Theta(\log n)$-bit words, a solution to the
constructive Erd\H{o}s--Ginzburg--Ziv problem can be found
deterministically in $O(n)$ word operations using $O(n)$ words of
working space.
\end{theorem}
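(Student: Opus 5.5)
The plan follows the abstract: reduce to primes, solve a subset-sum problem over $\F_p$ in linear time, and recombine along the factorization of $n$. I would first settle the prime case $n=p$ and then extend by the multiplicativity argument of the original proof. For $p$ prime, I would sort the $2p-1$ residues $a_i \bmod p$ by counting sort in $O(p)$ time. If some residue occurs at least $p$ times, those $p$ indices already form a solution. Otherwise I pair the sorted list as $(b_1,b_{p}),(b_2,b_{p+1}),\ldots,(b_{p-1},b_{2p-2})$, so that each pair has two distinct residues. Put $x_j=b_j$ and $d_j=b_{j+p-1}-b_j\neq 0$. Choosing one element from every pair together with $b_{2p-1}$ gives sum $S_0+\sum_{j\in J} d_j$, where $S_0=\sum_j x_j+b_{2p-1}$. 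It then suffices to find $J$ with $\sum_{j\in J}d_j=-S_0$ in $\F_p$. This is exactly the problem named in the abstract: given $p-1$ nonzero elements of $\F_p$ and a target, find a subset summing to the target. By Cauchy--Davenport such a subset exists, since the sumsets $\{0,d_1\}+\cdots+\{0,d_{p-1}\}$ have size $p$.

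The core is a linear-time algorithm for that subset-sum problem. I would maintain the reachable set as a sumset of components, each component being a translated arithmetic progression $c+\{0,g,2g,\ldots,\ell g\}$ together with reconstruction data. Adding a new element $d$ creates a temporary component $\{0,d\}$ of length $1$. I would then walk it against the active components to find the first collision $u\,g - v\,d \equiv$ (endpoint offset) with positive coefficients. I would use a table indexed by residues, recording which active component owns each residue endpoint, so that the first collision is detected in time proportional to the number of steps walked. At a collision, the coefficients are coprime, which is where the Euclidean algorithm enters. Coprimality lets me merge the two progressions into a single progression of length equal to the sum of the lengths, with common difference re-expressed via B\'ezout. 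The amortization charges every table probe and every Euclidean step to the total increase in component lengths. That total is at most $p-1$ because the reachable set never exceeds $p$ elements, so the whole phase runs in $O(p)$.

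Reconstruction is done with a forest in which each merge becomes a node whose children are the merged components, labelled by the collision coefficients. To realize a target element $c+kg$ of the final component (of length $p-1$, so it covers $\F_p$), I descend the forest and split the index $k$ into indices in the children using the stored coefficients. Each node costs $O(1)$, so this is $O(p)$ total.

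For composite $n$, I would use the standard reduction: if $n=ab$, repeatedly extract $a$-subsets with sum $\equiv 0 \pmod a$ from $2a-1$ remaining elements. This gives $2b-1$ disjoint blocks, and I then solve EGZ mod $b$ on the block sums divided by $a$. To keep the total linear I would always split off a smallest prime factor $q$ of $n$, obtaining $2m-1$ blocks with $m=n/q$. The first layer then costs $O((2m-1)q)=O(n)$ by the prime case, and the recursion on $m$ gives a geometric sum $O(n+n/q+\cdots)=O(n)$. Factoring all numbers up to $n$ with a linear sieve takes $O(n)$ time and space.

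The main obstacle is the amortized collision detection and merge. Three things must be shown together. First, the first collision really yields positive coprime coefficients. Second, the merged set is a single progression whose length grows by the temporary component's size. Third, the walk cost is bounded by that length gain, including the case where the walk stops at the full group. I would first try to prove an invariant that the residues occupied by active components are disjoint and recorded in the owner table. Under that invariant, a walk step either hits a free cell, which is then claimed and extends a component, or hits an owned cell, which triggers a merge. Every step therefore increases total length, which gives the $O(p)$ bound.
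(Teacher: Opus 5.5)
Your reduction to target subset sum over $\F_p$ (pairing $b_j$ with $b_{j+p-1}$ and keeping $b_{2p-1}$) is correct, and so is your composite recursion; both agree with the paper up to inessential choices. The gap is in the amortization of the core algorithm.

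You merge two colliding components into a progression of length exactly $\ell+m$, and you argue that every walk step increases total length. Neither gives an $O(p)$ bound. A scan step over a free cell adds no length; it only checks that an existing multiple $t\,v(T)$ is unowned. The bigger problem arises when the first collision $x\,v(T)=y\,v(C)$ has $x,y>1$. Then the merged direction $g=v(T)/y=v(C)/x$ differs from both old directions, so all $\ell+m$ old owner entries become useless. They must be cleared, the merged component must be rescanned from coefficient $1$, and its entries must be rewritten. This costs $\Theta(\ell+m)$ plus a Euclidean computation. If the merged length is only $\ell+m$, the merge leaves the total length (temporary component included) unchanged, so your potential pays for none of this work. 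Nothing in your argument prevents a component of length $\Theta(p)$ from being redirected $\Theta(p)$ times, which would cost $\Theta(p^2)$.

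The missing ingredient is that coprimality yields a much longer progression. Write $v(T)=yg$, $v(C)=xg$ and $F=(x-1)(y-1)$. Every integer in $[F,\ell y+mx-F]$ has the form $\alpha y+\beta x$ with $0\le\alpha\le\ell$ and $0\le\beta\le m$. Hence $Fg+\AP(g,L)\subseteq\AP(v(T),\ell)+\AP(v(C),m)$ with $L=mx+\ell y-2(x-1)(y-1)$. The surplus is $\Delta=L-\ell-m=(m-y)(x-1)+(\ell-x)(y-1)+x+y-2$. It satisfies both $\Delta\ge\ell+m-2$ and $\Delta\ge x+y-2$, so it pays for the clearing, rescanning, rewriting and the Euclidean algorithm. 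The total surplus is $O(p)$ because the total length stays below $2p$ until a full component appears.

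The degenerate collisions need their own treatment, which your sketch also lacks:
\begin{itemize}
\item If $x=1$, then $g=v(C)$ but $\Delta=\ell(y-1)$ only, so $C$'s $m$ entries must be retained rather than rebuilt.
\item If $y=1$, then $g=v(T)$, and the scanned prefix of $T$ is kept; here $\Delta=m(x-1)$ covers $C$'s entries and the new suffix.
\item If $x=y=1$, then $\Delta=0$, and the work must be charged to per-coefficient credits of the temporary component.
\end{itemize}

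Two smaller points. First, the run need not end with a single component of length $p-1$. It can end with several active components whose nonzero multiples partition $\F_p^\times$; the target is then read from the owner table at $\tau$ minus the sum of offsets. Second, collisions should be detected among the untranslated multiples $t\,v(C)$, with offsets summed separately, not through an ``endpoint offset'' condition.
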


The product reduction in the original proof~\cite{EGZ1961} reduces
composite moduli to smaller factors, so it remains to solve the prime
case in linear time.

For a prime $p$, write $\F_p=\Z/p\Z$.  After sorting the input by
residue, either one residue occurs at least $p$ times or the problem
reduces to the following \emph{target subset sum problem over $\F_p$}: given
$d_1,\ldots,d_{p-1}\in\F_p^\times$ and $\tau\in\F_p$, find
$J\subseteq\{1,\ldots,p-1\}$ such that
\[
        \sum_{j\in J}d_j=\tau .
\]
Repeated application of the Cauchy--Davenport
theorem~\cite{Nathanson1996} gives
\[
        \{0,d_1\}+\cdots+\{0,d_{p-1}\}=\F_p .
\]
Thus a solution exists for every target $\tau$.
Our algorithm maintains active components whose sets of nonzero
multiples are pairwise disjoint, together with at most one temporary
component.  It scans the nonzero multiples of the temporary component
in increasing coefficient order.  The first collision with an active
component yields coprime coefficients.  A lemma on bounded integer
representations then allows the two components to be replaced by a
longer one.  Over the whole execution,
the sum of the length increases is $O(p)$; this bounds the work spent
searching for collisions, updating table entries, and computing
inverses.  A forest of reconstruction nodes converts the final
coefficients into a subset of the input indices.  This gives $O(p)$
word operations and $O(p)$ words of working space for target subset sum
over $\F_p$ and, after the product reduction, the same linear bounds for
the original problem.

We use the following computational model.  The word RAM has
$c\log n$-bit words for a sufficiently large
fixed constant $c$.  Input values and intermediate integer sums occupy
$O(1)$ words, and arithmetic on these quantities takes constant time.
Every exact integer used in a component merge has magnitude below
$2n^2$ and fits in one word.  All stated running times count word
operations.

\section{Prime moduli}

We first solve target subset sum over $\F_p$.  Progression elements lie
in $\F_p$, while coefficients and lengths are
ordinary integers.  For $v\in\F_p^\times$ and an integer $\ell\ge0$,
write
\[
        \AP(v,\ell)=\{tv:0\le t\le\ell\}.
\]
For $\ell\le p-1$, the set $\AP(v,\ell)$ has $\ell+1$ distinct
elements.  We call $\ell$ the component length; it counts the positive
coefficients.  The algorithm stores only lengths at most $p-1$; if a
merge produces a larger length, it is truncated to $p-1$.  A component
of length $p-1$ is called \emph{full}; its progression is $\F_p$.

Each component has a fixed identifier.  The algorithm processes
$d_1,d_2,\ldots,d_{p-1}$ from left to right.  The data stored for a
component $C$ represent the translated progression
$\xi(C)+\AP(v(C),\ell(C))$ and satisfy
\[
        v(C)\in\F_p^\times,\qquad 0\le \ell(C)\le p-1,
        \qquad \xi(C)\in\F_p.
\]
The data also include a set $I(C)$ of processed input indices and a
reconstruction-tree root.  For every $0\le t\le\ell(C)$, this root
encodes a subset $R_C(t)\subseteq I(C)$ satisfying
\begin{equation}\label{eq:component-map}
        \sum_{i\in R_C(t)}d_i
        =\xi(C)+t\,v(C)\quad\text{in }\F_p.
\end{equation}

The state contains a family $\mathcal A$ of active components, at most
one temporary component $T$, and a variable $C_{\mathrm{full}}$,
initially $\bot$.  While $C_{\mathrm{full}}=\bot$, write
\[
  \mathcal F=
  \begin{cases}
     \mathcal A, & \text{if $T$ is absent},\\
     \mathcal A\cup\{T\},&\text{if $T$ is present},
  \end{cases}
  \qquad
  X=\sum_{C\in\mathcal F}\xi(C),
  \qquad
  \Lambda=\sum_{C\in\mathcal F}\ell(C).
\]
Once $C_{\mathrm{full}}$ is set to a full component, the main loop
stops.

We identify the nonzero elements of $\F_p$ with their representatives
$1,\ldots,p-1$.  The arrays $\Owner$ and $\Dir$ support
constant-time queries:
\[
  \Owner[1],\ldots,\Owner[p-1]
  \quad\text{and}\quad
  \Dir[1],\ldots,\Dir[p-1].
\]
Here $\Owner[r]=(C,t)$ means that $r=t\,v(C)$ in $\F_p$, and
$\Dir[v]=C$ means that $C$ is the active component with direction $v$.
Empty entries are denoted by $\bot$.  For every active component $C$,
the owner array contains
$(C,t)$ at position $t\,v(C)$ for $1\le t\le\ell(C)$, and
$\Dir[v(C)]=C$.

Replacing the data stored for a component does not change its
identifier, so entries of $\Owner$ and $\Dir$ that refer to it remain
valid.

A temporary component carries two integers
\[
        0\le h(T)\le s(T)\le\ell(T).
\]
Coefficients $1,\ldots,h(T)$ form the \emph{retained prefix}; their
table entries are the \emph{retained owner entries} of $T$.
Coefficients $1,\ldots,s(T)$ form the \emph{scanned prefix}.  Their
residues are known to be disjoint from all active components, and the
owner entries for $h(T)<t\le s(T)$ are empty.  A scan of $T$ therefore
resumes at coefficient $s(T)+1$.

While $C_{\mathrm{full}}=\bot$, the state after each completed
singleton creation, extension, merge, or activation satisfies the
following invariant.
\begin{enumerate}[label=(\roman*),leftmargin=2.8em]
  \item The sets $I(C)$, for $C\in\mathcal F$, are pairwise disjoint
  and partition the processed input indices.  Equation
  \eqref{eq:component-map} holds for every component and coefficient.
  \item The sets $\{t\,v(C):1\le t\le\ell(C)\}$ of the active
  components are pairwise disjoint.
  \item If $T$ exists, the residues
  $\{t\,v(T):1\le t\le s(T)\}$ are disjoint from
  $\{t\,v(C):1\le t\le\ell(C)\}$ for every active component $C$.
  \item The owner array contains exactly the entries of the active
  components and, if $T$ exists, the retained owner entries of $T$.
  Active entries point to their active component.  If $T$ exists, then
  $\Owner[tv(T)]=(T,t)$ for $1\le t\le h(T)$ and
  $\Owner[tv(T)]=\bot$ for $h(T)<t\le s(T)$.  All other owner entries are
  empty.
  \item The direction array contains exactly the active components.
\end{enumerate}

\begin{lemma}\label{lem:invariant}
Whenever this invariant holds, for every
\[
        r\in X+\sum_{C\in\mathcal F}\AP(v(C),\ell(C))
\]
there is a set $J$ of processed input indices such that
$r=\sum_{i\in J}d_i$.  The set $J$ can be recovered from the
reconstruction-tree roots.
\end{lemma}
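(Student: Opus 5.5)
The plan is to unwind the sumset membership componentwise and use only invariant~(i). Write $\mathcal F=\{C_1,\ldots,C_m\}$. By definition of the sumset and of $X$, an element
\[
  r\in X+\sum_{C\in\mathcal F}\AP(v(C),\ell(C))
\]
can be written as
\[
  r=\sum_{k=1}^m\bigl(\xi(C_k)+t_k\,v(C_k)\bigr)\quad\text{in }\F_p,
  \qquad\text{with integers } 0\le t_k\le\ell(C_k).
\]
Here the translate $X=\sum_k\xi(C_k)$ is distributed back onto the individual components. This is a purely formal regrouping in the abelian group $\F_p$.

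Next, for each $k$ I apply equation~\eqref{eq:component-map}, which invariant~(i) guarantees for every component of $\mathcal F$ and every coefficient $0\le t\le\ell(C_k)$. It gives a set $R_{C_k}(t_k)\subseteq I(C_k)$ with
\[
  \sum_{i\in R_{C_k}(t_k)}d_i=\xi(C_k)+t_k v(C_k).
\]
Set $J=\bigcup_k R_{C_k}(t_k)$. By invariant~(i) the sets $I(C_k)$ are pairwise disjoint, so the $R_{C_k}(t_k)$ are pairwise disjoint. Hence the sum over $J$ splits as the sum of the per-component sums, and this equals $r$. Since each $I(C_k)$ consists of processed indices, $J$ does too.

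For the recovery claim, each $R_{C_k}(t_k)$ is by hypothesis encoded by the reconstruction-tree root of $C_k$. Evaluating each root at its coefficient $t_k$ and concatenating the resulting disjoint outputs yields $J$.

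The only point needing care, and the one I expect to be the real content, is the disjointness. It is what prevents an input index from being counted twice, which would make $J$ a multiset rather than a set. It also shows why the temporary component $T$ may be included in $\mathcal F$: invariant~(i) places $I(T)$ in the same partition as the active components. Notably, invariants~(ii)--(v) on the owner and direction tables are not needed for this lemma.
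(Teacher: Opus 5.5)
Your proposal is correct and follows the paper's proof essentially verbatim: you write $r$ with a coefficient $0\le t_C\le\ell(C)$ for each $C\in\mathcal F$, apply \eqref{eq:component-map} to each component, and use the disjointness of the index sets from invariant~(i) to take the disjoint union $\bigcup_{C\in\mathcal F}R_C(t_C)$. That union is the set $J$, which the reconstruction roots recover.
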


\begin{proof}
Choose a coefficient $0\le t_C\le\ell(C)$ for each
$C\in\mathcal F$.  By (i), the sets
$R_C(t_C)$ are disjoint, so their union has sum
\[
        \sum_{C\in\mathcal F}
        \bigl(\xi(C)+t_Cv(C)\bigr)
        =X+\sum_{C\in\mathcal F}t_Cv(C).
\]
The reconstruction forest recovers this union.
\end{proof}

\begin{lemma}\label{lem:coprime}
Suppose the invariant holds and the scan of a temporary component $T$
first encounters an occupied owner entry, called a collision, at
coefficient $x$:
\[
        xv(T)=yv(C),\qquad \Owner[xv(T)]=(C,y),
\]
where $C$ is active.  Then $\gcd(x,y)=1$.
\end{lemma}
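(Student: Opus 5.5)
Argue by contradiction using minimality of the first collision. Let $g=\gcd(x,y)$ and suppose $g>1$. Write $x=gx'$ and $y=gy'$ with $1\le x'<x$ and $1\le y'<y\le\ell(C)$. The goal is to exhibit an earlier collision at coefficient $x'$, contradicting the assumption that $x$ is the first one.

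First, the coefficients are small: $x\le\ell(T)\le p-1$ and $y\le\ell(C)\le p-1$. Hence $g\le p-1$ and $g$ is a unit in $\F_p$. The relation $xv(T)=yv(C)$ reads $g\,x'v(T)=g\,y'v(C)$ in $\F_p$, and multiplying by $g^{-1}$ gives
\[
x'v(T)=y'v(C).
\]
Since $1\le y'\le\ell(C)$, the residue $y'v(C)$ is a nonzero multiple belonging to the active component $C$. By invariant (iv), $\Owner[y'v(C)]=(C,y')$, so that entry is occupied.

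Next, compare with where the scan was when it reached $x$. The scan of $T$ resumes at $s(T)+1$ and inspects coefficients in increasing order, so there are two cases for $x'<x$.

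\begin{itemize}
\item If $x'\le s(T)$, invariant (iii) says $x'v(T)$ avoids every active component's nonzero multiples. This contradicts $x'v(T)=y'v(C)$ with $C$ active.
\item If $s(T)<x'<x$, the scan inspected coefficient $x'$ before $x$. It found $\Owner[x'v(T)]=(C,y')$ occupied by an active component, so $x'$ would have been an earlier collision, contradicting the choice of $x$ as the first.
\end{itemize}

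Either way we reach a contradiction, so $g=1$.

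The main point needing care is making the "first encounters" hypothesis precise. The scanned prefix $1,\dots,s(T)$ is certified collision-free by (iii). The retained entries owned by $T$ itself are not active, so a collision by definition concerns an active owner. With that settled, the division by $g$ is the only nontrivial step, and it requires $g\not\equiv0\pmod p$, which the bound $\ell\le p-1$ guarantees.
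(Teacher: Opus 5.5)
Your proof is correct and follows the paper's argument: you divide the collision relation by $g=\gcd(x,y)$ to get $x'v(T)=y'v(C)$ with $x'<x$ and $1\le y'\le\ell(C)$, which contradicts the collision-free scanned prefix. The paper merges your two cases into one by noting that at the collision $s(T)=x-1$, so $x'\le s(T)$ contradicts invariant (iii) directly.
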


\begin{proof}
At the collision, $s(T)=x-1$.  If $r=\gcd(x,y)>1$, then
\[
        (x/r)v(T)=(y/r)v(C).
\]
Here $x/r\le s(T)$ and $1\le y/r\le\ell(C)$, contradicting (iii).
\end{proof}

\begin{lemma}\label{lem:frob}
Let $x,y$ be coprime positive integers with $x,y>1$.  Let
$\ell\ge x$ and $m\ge y$, and put
\[
        F=(x-1)(y-1),\qquad S=\ell y+mx.
\]
Then every integer $N$ with
\[
        F\le N\le S-F
\]
has a representation
\[
        N=\alpha y+\beta x,\qquad
        0\le \alpha\le \ell,\quad 0\le \beta\le m .
\]
\end{lemma}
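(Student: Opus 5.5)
Since $\gcd(x,y)=1$, the residue of $N$ modulo $x$ determines a unique $\alpha_0\in\{0,\ldots,x-1\}$ with $\alpha_0 y\equiv N\pmod x$. All representations $N=\alpha y+\beta x$ are then $\alpha=\alpha_0+kx$, $\beta=(N-\alpha y)/x$ for $k\in\Z$. So the task reduces to finding an integer $k$ for which both constraints $0\le\alpha\le\ell$ and $0\le\beta\le m$ hold.

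The constraints $\alpha\ge0$ and $\beta\le m$ are lower bounds on $\alpha$. Likewise $\alpha\le\ell$ and $\beta\ge0$ (i.e.\ $\alpha y\le N$) are upper bounds. Concretely, we need
\[
\max\Bigl(0,\ \frac{N-mx}{y}\Bigr)\ \le\ \alpha\ \le\ \min\Bigl(\ell,\ \frac{N}{y}\Bigr),
\qquad \alpha\equiv\alpha_0\pmod x .
\]
Rather than handling this interval directly, I would split into two cases according to which constraints bind. For small $N$, take the minimal nonnegative representative $\alpha=\alpha_0\le x-1\le\ell$. Then $\beta=(N-\alpha_0y)/x$ is an integer, and $\beta\ge0$ holds whenever $N\ge (x-1)y-(x-1)\cdot0$. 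More carefully, the classical Frobenius bound says every $N\ge F$ has a representation with $\alpha\le x-1$ and $\beta\ge0$: indeed $N-\alpha_0 y\equiv0\pmod x$ and $N-\alpha_0y\ge F-(x-1)y=-(x-1)>-x$, so $\beta\ge0$. It remains to check $\beta\le m$. By symmetry (swapping roles via $N\mapsto S-N$), we get the other case.

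Concretely, the symmetric substitution is $\alpha'=\ell-\alpha$, $\beta'=m-\beta$, which transforms $N$ into $S-N$. Since $S-N\ge F$ as well, the same argument gives a representation $S-N=\alpha'y+\beta'x$ with $0\le\alpha'\le x-1$ and $\beta'\ge0$. Hence we have two candidates. The first is $\alpha\in[0,x-1]$ with $\beta\ge0$, which is valid if $\beta\le m$. The second is $\alpha\in[\ell-x+1,\ell]$ with $\beta\le m$, which is valid if $\beta\ge0$. Both candidates lie in the same residue class mod $x$ and are obtained from one another by steps of size $x$ in $\alpha$, so the $\alpha$-values in the class between them form an arithmetic progression. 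Along this progression $\beta$ decreases by exactly $y$ per step. Starting from the first candidate ($\beta\ge0$) and moving toward the second, pick the first $\alpha$ in the class with $\beta\le m$.

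The main obstacle is verifying that this $\alpha$ still satisfies $\alpha\le\ell$ and $\beta\ge0$. For the first, note that the second candidate already has $\beta\le m$ and $\alpha\le\ell$, so the first such point has $\alpha$ no larger than it. The second condition is the delicate one. I would argue it as follows: the previous point in the class had $\beta>m$, so the chosen one has $\beta>m-y\ge -1$ after using $m\ge y$, giving $\beta\ge0$. An analogous check handles the case where the first candidate already has $\beta\le m$. The hypotheses $\ell\ge x$ and $m\ge y$ are exactly what ensure the two candidate windows overlap consistently, i.e.\ that the first candidate's $\alpha$ does not exceed the second's. If the first candidate lay beyond the second, I would instead take the second candidate directly and check $\beta\ge0$ using $N\ge F$.
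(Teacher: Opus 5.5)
Your proof is correct and is essentially the paper's argument. Both proofs fix the residue class of $\alpha$ modulo $x$ and take the lower endpoint $\alpha=\alpha_0$, justified by $N\ge F$. Both take the upper endpoint at the largest $\alpha\le\ell$ in that class: your symmetry $(\alpha,\beta)\mapsto(\ell-\alpha,m-\beta)$ produces exactly the paper's $a_1$, in place of its direct estimate $S-F-(a_1y+mx)\le x-1$. Both then use $m\ge y$ to stop $\beta$ from jumping over $[0,m]$, which is the paper's overlapping-intervals step. The ordering of the two candidates, which you only assert, is immediate: $\ell-\alpha'\ge\ell-x+1\ge1$ and $\alpha_0$ is the least nonnegative member of its class, so your fallback case never occurs.
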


\begin{proof}
Fix a residue class modulo $x$.  Since $\gcd(x,y)=1$, there is a
unique $a_0\in\{0,\ldots,x-1\}$ such that $a_0y$ belongs to this
class.  The nonnegative values of $\alpha$ for which $\alpha y$ lies
in this class are $\alpha=a_0+kx$, where $k\ge0$.  For a fixed $k$,
varying $\beta$ from $0$ to $m$ represents the members of the class in
\[
        [a_0y+kxy,\; a_0y+kxy+mx]\qquad(k\ge0),
\]
restricted to the fixed residue class.  Consecutive intervals overlap
or touch within that class: their starting points differ by $xy$, while
$m\ge y$.

Let $a_1$ be the largest integer at most $\ell$ with
$a_1\equiv a_0\pmod x$.  Thus every integer in the fixed residue class
between $a_0y$ and $a_1y+mx$ is represented with
$0\le\alpha\le\ell$ and $0\le\beta\le m$.

The preceding integer in this residue class is at most
\[
        a_0y-x\le (x-1)y-x=F-1.
\]
Thus every member of the class in $[F,S-F]$ is at least $a_0y$.
For the upper endpoint,
\[
        S-F-(a_1y+mx)=(\ell-a_1)y-(x-1)(y-1)
        \le x-1,
\]
because $0\le\ell-a_1\le x-1$.  The next integer in the class is
$a_1y+mx+x>S-F$.  Hence every member of the class in $[F,S-F]$ lies
between the two represented endpoints, which proves the lemma.
\end{proof}

\begin{lemma}\label{lem:merge}
Let components $C$ and $D$ have disjoint index sets.  Write
$v=v(C)$, $w=v(D)$, $\ell=\ell(C)$, and $m=\ell(D)$.  Suppose there
are positive integers $x\le \ell$ and $y\le m$ such that
\[
        xv=yw\quad\text{in }\F_p ,
\]
and suppose $\gcd(x,y)=1$.  Put
\[
        g=v/y=w/x\quad\text{in }\F_p .
\]
Define the starting coefficient $F$ and raw length $L$ by
\[
 (F,L)=
 \begin{cases}
   (0,m+\ell y),&x=1,\\
   (0,\ell+mx),&x>1\text{ and }y=1,\\
   ((x-1)(y-1),\,mx+\ell y-2(x-1)(y-1)),&x,y>1.
 \end{cases}
\]
Then
\[
        Fg+\AP(g,L)\subseteq \AP(v,\ell)+\AP(w,m).
\]
For every $0\le t\le \min\{L,p-1\}$, there are coefficients
$\alpha$ and $\beta$ satisfying
\[
        0\le\alpha\le\ell,\qquad
        0\le\beta\le m,\qquad
        F+t=\alpha y+\beta x .
\]
For each $t$, fix one such pair and denote it by
$\alpha(t),\beta(t)$.  Consequently, the component $E$ defined by
\[
\begin{split}
 I(E)&=I(C)\mathbin{\dot\cup}I(D),\qquad
 v(E)=g,\qquad \ell(E)=\min\{L,p-1\},\\
 \xi(E)&=\xi(C)+\xi(D)+Fg,\\
 R_E(t)&=R_C(\alpha(t))\mathbin{\dot\cup}R_D(\beta(t))
\end{split}
\]
satisfies \eqref{eq:component-map}.
\end{lemma}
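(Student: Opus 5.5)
The plan is to reduce everything to the integer representation statement and then translate it back to $\F_p$ using $v=yg$ and $w=xg$. First I check that $g$ is well defined. Since $y\le m\le p-1$ and $x\le\ell\le p-1$, both $x$ and $y$ are invertible in $\F_p$, and $xv=yw$ gives $v/y=w/x$. For any $\alpha,\beta$ we then have $\alpha v+\beta w=(\alpha y+\beta x)g$. Hence the inclusion $Fg+\AP(g,L)\subseteq\AP(v,\ell)+\AP(w,m)$ follows from the representation claim for every $0\le t\le L$. Once $F+t=\alpha y+\beta x$ with $0\le\alpha\le\ell$ and $0\le\beta\le m$, the element $(F+t)g$ equals $\alpha v+\beta w$. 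So I prove the integer claim for all $0\le t\le L$, which is slightly stronger than needed, and restrict to $t\le\min\{L,p-1\}$ for the component.

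The integer claim splits into the three cases in the definition of $(F,L)$.

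\textbf{Case $x=1$.} Here $F=0$ and we need every $N\in[0,m+\ell y]$ in the form $\alpha y+\beta$ with $\alpha\le\ell$ and $\beta\le m$. Since $y\le m$, the intervals $[\alpha y,\alpha y+m]$ for $\alpha=0,\ldots,\ell$ overlap or touch consecutively. They therefore cover $[0,\ell y+m]$.

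\textbf{Case $y=1<x$.} This is symmetric, using $x\le\ell$: the intervals $[\beta x,\beta x+\ell]$ for $\beta=0,\ldots,m$ cover $[0,\ell+mx]$.

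\textbf{Case $x,y>1$.} This is exactly Lemma~\ref{lem:frob} with $N=F+t$. The hypotheses $\ell\ge x$ and $m\ge y$ are given, and $F\le F+t\le F+L=S-F$, where $S=\ell y+mx$.

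One point I will check here is that $L\ge0$ in this case, so that the range of $t$ is nonempty. Since $\ell\ge x$ and $m\ge y$, we get $\ell y+mx\ge 2xy>2(x-1)(y-1)$. This nonnegativity is not strictly needed, but it makes the component well formed.

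Finally I verify \eqref{eq:component-map} for $E$. Because $I(C)\cap I(D)=\emptyset$ and $R_C(\alpha)\subseteq I(C)$, $R_D(\beta)\subseteq I(D)$, the union $R_C(\alpha(t))\mathbin{\dot\cup}R_D(\beta(t))$ is disjoint and lies in $I(E)$. Its sum is computed from \eqref{eq:component-map} for $C$ and $D$:
\[
\xi(C)+\alpha(t)v+\xi(D)+\beta(t)w=\xi(C)+\xi(D)+(F+t)g=\xi(E)+t\,v(E).
\]
I also need $v(E)=g\in\F_p^\times$, which holds because $v\neq0$, and $0\le\ell(E)\le p-1$.

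The only step with real content is the $x,y>1$ case, and it is already handled by Lemma~\ref{lem:frob}. The main thing to be careful about is therefore bookkeeping. The degenerate cases $x=1$ or $y=1$ need their own covering argument because Lemma~\ref{lem:frob} assumes $x,y>1$, and in those cases the window starts at $F=0$ rather than at $(x-1)(y-1)$.
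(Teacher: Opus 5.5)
Your proposal is correct and follows the paper's proof essentially step for step: the overlapping-interval covering for $x=1$ (and symmetrically $y=1$), Lemma~\ref{lem:frob} applied to $N=F+t$ when $x,y>1$, and the same sum computation via $v=yg$, $w=xg$ to verify \eqref{eq:component-map}. Your extra checks are harmless refinements: that $g$ is well defined, that $L\ge0$, and that the representation holds for all $0\le t\le L$, so the inclusion follows directly rather than through $\AP(g,p-1)=\F_p$.
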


\begin{proof}
Suppose first that $x=1$.  Then $v=yg$ and $w=g$, and $y\le m$ by
assumption.  For each $\alpha=0,\ldots,\ell$, the set
$\alpha yg+\AP(g,m)$ is the interval
$\{\alpha yg,\ldots,(\alpha y+m)g\}$.  Consecutive intervals overlap or
touch because $m\ge y$.  Their union is
$\AP(g,m+\ell y)$.  The remaining case $y=1<x$ is symmetric.

Now assume $x,y>1$.  Since $v=yg$ and $w=xg$,
\[
        \AP(v,\ell)+\AP(w,m)
        =
        \{(\alpha y+\beta x)g:
        0\le\alpha\le\ell,\;0\le\beta\le m\}.
\]
By Lemma~\ref{lem:frob}, every integer
$N\in[F,\ell y+mx-F]$ has a bounded representation
$N=\alpha y+\beta x$.  Taking $N=F+t$ for
$0\le t\le \min\{L,p-1\}$ gives the desired inclusion.

For a fixed $t$, write $\alpha=\alpha(t)$ and $\beta=\beta(t)$.
Equation~\eqref{eq:component-map} for the two children gives
\[
 \sum_{i\in R_C(\alpha)\mathbin{\dot\cup}R_D(\beta)}d_i
 =\xi(C)+\xi(D)+(\alpha y+\beta x)g
 =\xi(E)+tg .
\]
The index sets are disjoint by assumption, so the union is
disjoint.  Hence $E$ satisfies \eqref{eq:component-map}, including after
truncation.
\end{proof}

The reconstruction forest uses leaf, extension, and merge nodes.  A
singleton component $C$ for $d_k$ has direction $d_k$, length $1$,
offset $0$, index set $\{k\}$, and a leaf encoding
$R_C(0)=\varnothing$ and $R_C(1)=\{k\}$.  Extending a component $C$ of
direction $d_k$ produces a component $E$ with the same direction and
offset, length $\ell(C)+1$, and index set
$I(C)\mathbin{\dot\cup}\{k\}$.  The extension node stores $C$'s
reconstruction-tree root and the index $k$, and
\[
 R_E(t)=
 \begin{cases}
   R_C(t),&0\le t\le\ell(C),\\
   R_C(\ell(C))\mathbin{\dot\cup}\{k\},&t=\ell(C)+1.
 \end{cases}
\]
A merge node stores the two child roots, their lengths $\ell$ and $m$,
and $x$, $y$, and $F$; when $x,y>1$, it also stores
$y^{-1}\pmod x$.  The following formulas compute the fixed pair
$\alpha(t),\beta(t)$ in constant time.
If $x=1$, a coefficient
$0\le t\le m+\ell y$ is represented by
\[
        \alpha=\max\left\{0,\left\lceil\frac{t-m}{y}\right\rceil\right\},
        \qquad
        \beta=t-\alpha y .
\]
The inequalities $0\le\alpha\le\ell$ and $0\le\beta\le m$ follow from
$y\le m$ and $t\le m+\ell y$.  The case $y=1$ is symmetric, using
\[
        \beta=\max\left\{0,\left\lceil\frac{t-\ell}{x}\right\rceil\right\},
        \qquad
        \alpha=t-\beta x .
\]

For $x,y>1$, put $N=F+t$.  Let $u\in\{1,\ldots,x-1\}$ be the inverse
of $y$ modulo $x$ stored at the merge node, and compute
\[
        a_0=((N\bmod x)u)\bmod x,\qquad 0\le a_0<x,
        \qquad
        b_0=\frac{N-a_0y}{x}.
\]
This evaluation order keeps every product below $p^2$.  The proof of
Lemma~\ref{lem:frob} gives $b_0\ge0$.  Put
\[
        k=\max\left\{0,\left\lceil\frac{b_0-m}{y}\right\rceil\right\},
        \qquad
        \alpha=a_0+kx,\qquad
        \beta=b_0-ky .
\]
Then $N=\alpha y+\beta x$, and the proof of Lemma~\ref{lem:frob}
gives $0\le\alpha\le\ell$ and
$0\le\beta\le m$.  Once the inverse is stored at the merge node, this
conversion takes constant time.

For a temporary component $T$ colliding with an active component $C$ at
coefficients $x$ and $y$, write $\Merge(T,C;x,y)$ for the component
constructed by Lemma~\ref{lem:merge}, with $T$ first and $C$ second.
Lemma~\ref{lem:coprime} gives $\gcd(x,y)=1$.

Fix nonzero elements $d_1,\ldots,d_{p-1}\in\F_p^\times$ and a target
$\tau\in\F_p$.  The following algorithm, \Alg{PrimeTarget}, returns a set
$J\subseteq\{1,\ldots,p-1\}$ such that
$\sum_{j\in J}d_j=\tau$.  Whenever it creates a temporary component,
it calls the subroutine \Alg{Normalize} in Algorithm~\ref{alg:normalize}.

\begin{algo}{\Alg{PrimeTarget}$(p;d_1,\ldots,d_{p-1};\tau)$}
\label{alg:prime-target}
\begin{algorithmic}[1]
\State $\mathcal A\gets\varnothing$ and $\Lambda\gets0$; initialize
$\Owner$ and $\Dir$ to $\bot$; set $T,C_{\mathrm{full}}\gets\bot$
\State Build the table of inverses in $\F_p$
\For{$k=1,\ldots,p-1$}
  \If{$\Dir[d_k]=\bot$}
    \State Let $T$ be a singleton component for $d_k$ and set
    $h(T),s(T)\gets0,0$; set $\Lambda\gets\Lambda+1$
    \State $\Call{Normalize}{T}$
  \Else
    \State $C\gets\Dir[d_k]$, $\ell'\gets\ell(C)+1$; let $E$ be the
    extension of $C$ by $d_k$; set $\Lambda\gets\Lambda+1$
    \If{$\ell'=p-1$}
      \State Remove $C$ from $\mathcal A$, clear its owner entries,
      and set $\Dir[d_k]\gets\bot$
      \State Replace $C$'s stored data by $E$'s data; set
      $C_{\mathrm{full}}\gets C$ and \textbf{break}
    \ElsIf{$\Owner[\ell'd_k]=\bot$}
      \State Replace $C$'s stored data by $E$'s data and set
      $\Owner[\ell'd_k]\gets(C,\ell')$
    \Else
      \State Remove $C$ from $\mathcal A$, set $\Dir[d_k]\gets\bot$,
      and clear its owner entries
      \State Replace $C$'s stored data by $E$'s data; set $T\gets C$,
      $h(T)\gets0$, and $s(T)\gets\ell'-1$
      \State $\Call{Normalize}{T}$
    \EndIf
  \EndIf
  \If{$C_{\mathrm{full}}\ne\bot$ or $\Lambda=p-1$}
    \State \textbf{break}
  \EndIf
\EndFor
\State Set $\mathcal G\gets\mathcal A$; if $C_{\mathrm{full}}\ne\bot$,
set $\mathcal G\gets\mathcal A\cup\{C_{\mathrm{full}}\}$
\State Set $X\gets\sum_{C\in\mathcal G}\xi(C)$ and
$t_C\gets0$ for every $C\in\mathcal G$
\If{$C_{\mathrm{full}}\ne\bot$}
  \State Choose the unique
  $t_{C_{\mathrm{full}}}\in\{0,\ldots,p-1\}$ such that
  $t_{C_{\mathrm{full}}}v(C_{\mathrm{full}})=\tau-X$
\ElsIf{$\tau-X\ne0$}
  \State Read $\Owner[\tau-X]=(C,t)$ and set $t_C\gets t$
\EndIf
\State \Return $\displaystyle\bigcup_{C\in\mathcal G}R_C(t_C)$
\end{algorithmic}
\end{algo}

The subroutine \Alg{Normalize}$(T)$ assumes the invariant above.  It
updates the shared state and returns after either activating the current
temporary component or storing it in $C_{\mathrm{full}}$; in either case
it sets $T\gets\bot$.

\begin{algo}{\Alg{Normalize}$(T)$}
\label{alg:normalize}
\begin{algorithmic}[1]
\While{$\ell(T)<p-1$}
  \While{$s(T)<\ell(T)$ and
  $\Owner[(s(T)+1)v(T)]=\bot$}
    \State $s(T)\gets s(T)+1$
  \EndWhile
  \If{$s(T)=\ell(T)$}
    \For{$t=h(T)+1,\ldots,\ell(T)$}
      \State $\Owner[tv(T)]\gets(T,t)$
    \EndFor
    \State $\mathcal A\gets\mathcal A\cup\{T\}$ and
    $\Dir[v(T)]\gets T$
    \State $T\gets\bot$
    \State \Return
  \EndIf
  \State Set $x\gets s(T)+1$; let $\Owner[xv(T)]=(C,y)$ and
  $m\gets\ell(C)$
  \State $E\gets\Merge(T,C;x,y)$
  \State Remove $C$ from $\mathcal A$ and set $\Dir[v(C)]\gets\bot$
  \State $\Lambda\gets\Lambda+\ell(E)-\ell(T)-m$
  \If{$x=1$}
    \State Clear the retained owner entries of $T$
    \State Replace $C$'s stored data by $E$'s data, retaining $C$'s
    owner entries
    \State Set $T\gets C$ and $h(T),s(T)\gets m,m$
  \ElsIf{$y=1$}
    \State Let $h_0\gets h(T)$ and clear the owner entries of $C$
    \State Replace $T$'s stored data by $E$'s data, retaining $T$'s
    owner entries
    \State Set $h(T),s(T)\gets h_0,x$
  \Else
    \State Clear the retained owner entries of $T$ and all owner
    entries of $C$
    \State Replace $T$'s stored data by $E$'s data and set
    $h(T),s(T)\gets0,0$
  \EndIf
\EndWhile
\State $C_{\mathrm{full}}\gets T$
\State $T\gets\bot$
\State \Return
\end{algorithmic}
\end{algo}

\begin{lemma}\label{lem:target-correct}
\Alg{PrimeTarget} returns a subset $J\subseteq\{1,\ldots,p-1\}$
such that $\sum_{j\in J}d_j=\tau$.
\end{lemma}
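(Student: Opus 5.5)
We prove the lemma by first showing that every step of the main loop and of \Alg{Normalize} preserves the invariant, and then analysing the two ways the loop can end.

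\textbf{Preservation of the invariant.} The check is done step by step.

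A singleton creation adds a new component whose index set is the new index $k$. It has $h=s=0$, so (iii) and (iv) hold trivially.

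In an extension that stays active, $\Dir[d_k]=C$ gives $v(C)=d_k$. The extension node satisfies \eqref{eq:component-map}. The test $\Owner[\ell'd_k]=\bot$ keeps (ii), and the single new owner entry keeps (iv).

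In an extension that becomes temporary, $C$'s entries are cleared and $s(T)=\ell'-1$ is set. Condition (iii) for coefficients $1,\ldots,\ell'-1$ follows from the old (ii), since those residues belonged to $C$ and were disjoint from the other active components. Condition (iv) holds with $h=0$, because the entries were cleared.

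Inside \Alg{Normalize}, the inner loop only advances $s(T)$ past empty entries. This preserves (iii): an empty entry is not owned by an active component by (iv), and it is not a retained entry of $T$, since distinct $t\le\ell(T)\le p-1$ give distinct residues. Activation writes the missing entries and uses (iii) to obtain (ii). Before activating, we check $\Dir[v(T)]=\bot$. Otherwise an active $C'$ with $v(C')=v(T)$ would share the residue $v(T)$ with $T$'s coefficient $1$, contradicting (iii) when $s(T)\ge1$; if $\ell(T)=0$ we need a small special argument, or note that lengths are always at least $1$.

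At a collision, Lemma~\ref{lem:coprime} gives coprimality and Lemma~\ref{lem:merge} gives (i) for $E$. In the three merge branches we verify (iii)--(iv) for the new $h,s$.
\begin{itemize}[leftmargin=2em]
\item If $x=1$: $g=v(C)$ and $F=0$. Coefficients $1,\ldots,m$ of $E$ are exactly $C$'s old residues, which were disjoint from the other active components by (ii). Hence $h=s=m$ is consistent, and the retained entries pointing to $C$'s identifier, which $T$ now carries, remain valid.
\item If $y=1$: $g=v(T)$. Coefficients up to $x-1$ were scanned, and coefficient $x$ was owned by $C$, which is now cleared, so $s=x$ is valid. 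The retained prefix $h_0$ is unchanged.
\item If $x,y>1$: everything is cleared, which is trivially consistent.
\end{itemize}
We also check that $\Lambda$ and $X$ are updated consistently, with $\xi(E)=\xi(T)+\xi(C)+Fg$.

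\textbf{Termination of the loop.} Each non-final merge strictly increases $\ell$, by Lemma~\ref{lem:merge}: for instance $L=m+\ell y>\max(\ell,m)$ when $x=1$, and similarly in the other cases. Lengths are bounded by $p-1$, so \Alg{Normalize} terminates.

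\textbf{Termination case (a): $C_{\mathrm{full}}\ne\bot$.} Then $C_{\mathrm{full}}$ has length $p-1$, so $\AP(v,p-1)=\F_p$ and the chosen $t$ exists. The family $\mathcal G$ has pairwise disjoint index sets by (i), since $C_{\mathrm{full}}$ came from $T$ or from an active component that was removed. Lemma~\ref{lem:invariant}, with $t_C=0$ for the other components, then gives a sum equal to $X+(\tau-X)=\tau$.

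\textbf{Termination case (b): no full component.} Either the loop ran through all $p-1$ inputs or $\Lambda=p-1$ was reached; in the first case the partition property and the fact that each input adds $1$ to $\Lambda$ while merges do not decrease it also lead to $\Lambda\ge p-1$. Since $\Lambda\le p-1$ when nothing is full, we get $\Lambda=p-1$, with $T$ absent after \Alg{Normalize}. By (ii) the active components own pairwise disjoint sets of $\sum\ell(C)=p-1$ nonzero residues. These sets therefore cover $\F_p^\times$, so $\Owner[\tau-X]$ is defined whenever $\tau-X\neq0$. Lemma~\ref{lem:invariant} then yields $\tau$.

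\textbf{Main obstacle.} The delicate point is the claim that merges never decrease $\Lambda$, i.e.\ $L\ge\ell+m$ in the case $x,y>1$. Here $L-\ell-m=\ell(y-1)+m(x-1)-2(x-1)(y-1)\ge0$, using $\ell\ge x$ and $m\ge y$, which come from $x\le\ell(T)$ and $y\le\ell(C)$. Truncation to $p-1$ only happens when the result is full. I expect the main effort to go into checking, branch by branch, the bookkeeping of (iii)--(iv) against retained entries and truncation.
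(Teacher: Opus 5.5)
Your proposal is correct and follows the paper's proof essentially step by step. You check that the invariant survives singleton creation, extension, scanning, activation and the three merge branches; you then treat the full-component case and the covering argument from $\Lambda=p-1$ and (ii), and you use the same key inequality $L-\ell-m\ge x+y-2\ge 0$. Your extra checks fill in details the paper leaves implicit: no active component already has direction $v(T)$ at activation, and \Alg{Normalize} terminates because $\ell(T)$ strictly increases at each non-final merge.
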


\begin{proof}
The empty state satisfies the invariant, and a singleton component
satisfies \eqref{eq:component-map}.  For an extension, $R_E$ is valid
because $d_k=v(C)$, and it adds the index $k$.  If
$\Owner[\ell'd_k]$ is empty, writing it preserves disjointness.
Otherwise, clearing $C$'s old owner entries makes the extension
temporary; its old residues remain disjoint from the nonzero multiples
of the active components.

Lemma~\ref{lem:merge} gives equation~\eqref{eq:component-map} for the
merged component and combines two disjoint index sets.  If $x=1$, the
new direction is $v(C)$.  For $1\le t\le m$, reconstruction uses only
the $C$-child, so its owner entries remain; their residues were
disjoint from the nonzero multiples of every other active component.
If $y=1<x$, the new direction is $v(T)$.  Clearing the entries of $C$
leaves the scanned prefix valid and makes $\Owner[xv(T)]$ empty, so
coefficient $x$ can be added.  If $x,y>1$, resetting both scan
counters to zero starts the next scan at coefficient $1$.  Updating
the indicated owner entries and removing $C$ from the direction array
gives (iii)--(v).  If a scan finishes, every unwritten owner entry of
the temporary component is empty and can be written.  Thus every update
made while $C_{\mathrm{full}}=\bot$ preserves the invariant.

If $C_{\mathrm{full}}$ is set, it has length $p-1$.  Since
$\AP(v(C_{\mathrm{full}}),p-1)=\F_p$, coefficient $0$ may be chosen
in every other component, and a unique coefficient of
$C_{\mathrm{full}}$ makes the represented sum equal to $\tau$.

Suppose that no full component is created.  Each iteration ends with
all components active: an extension whose new owner entry is empty
remains active, and \Alg{Normalize} returns only after activating its
temporary component.  Processing a new input index increases $\Lambda$
by one.  Every non-full merge has new length at least the sum of the two
lengths it replaces.  The formulas for $x=1$ and $y=1$ show this
directly; if $x,y>1$, then
\[
 L-\ell-m=(m-y)(x-1)+(\ell-x)(y-1)+x+y-2\ge0.
\]
Consequently, after processing all $p-1$ inputs, or earlier when the stopping test in
\Alg{PrimeTarget} succeeds, the final total length is at
least $p-1$.  By (ii), the
nonzero multiples covered by the active components are pairwise
distinct elements of $\F_p^\times$, so the total length is at most
$p-1$.  They therefore cover all of $\F_p^\times$.  If $\tau-X=0$,
choose coefficient $0$ in every component; otherwise use the unique
owner entry $\Owner[\tau-X]=(C,t)$.

Without a full component, Lemma~\ref{lem:invariant} converts the chosen
coefficients into a set of input indices whose corresponding values
sum to $\tau$.  With a full component, the same union argument applies
to $\mathcal G$, whose index sets remain disjoint.
\end{proof}

We now analyze the running time and space.  All offsets are stored as
least nonnegative residues modulo $p$.  Precompute the field inverses
$\operatorname{inv}_p(a)=a^{-1}\pmod p$
for $1\le a<p$.  In increasing order they satisfy
\[
 \operatorname{inv}_p(1)=1,\qquad
 \operatorname{inv}_p(a)\equiv
 \left(p-\left\lfloor\frac pa\right\rfloor\right)
 \operatorname{inv}_p(p\bmod a)\pmod p
 \quad(2\le a<p).
\]
Because $p$ is prime, $1\le p\bmod a<a$ in the second formula.  The
table therefore takes $O(p)$ word operations and $O(p)$ words of
working space,
and supplies all field divisions used for the directions of merged
components and for target recovery.

For $x,y>1$, the extended Euclidean algorithm computes
$y^{-1}\pmod x$, which is stored in the merge node.  The formulas for
$x=1$ and $y=1$ use integer division.

When $x,y>1$, we have $x\le\ell\le p-1$ and
$y\le m\le p-1$; hence the exact merge quantities satisfy
\[
  F\le(p-2)^2,\qquad
  S=\ell y+mx<2p^2,\qquad L=S-2F<2p^2.
\]
If $x=1$ or $y=1$, the raw length $L$ is less than $p^2$.  These
quantities are evaluated exactly before the component length is
truncated to $p-1$.  During reconstruction,
$N=F+t<p^2$.  The new offset is evaluated modulo $p$ as
\[
 \xi(E)=\bigl(\xi(T)+\xi(C)+(F\bmod p)g\bigr)\bmod p
\]
for a merge of a temporary component $T$ and an active component $C$.

A scan reads one owner entry for each positive coefficient it examines.
If the scan reaches the end of a temporary component, the empty owner
entries are written once when that component becomes active.

\begin{lemma}\label{lem:amortized}
The algorithm performs $O(p)$ owner-array reads and writes.
\end{lemma}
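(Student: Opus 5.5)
The plan is an amortized argument: charge every owner-array operation either to the processing of an input index or to the \emph{length gain} $L-\ell-m$ of a merge. First I will show that the total of these charges is $O(p)$. While $C_{\mathrm{full}}=\bot$, the quantity $\Lambda$ starts at $0$. It increases by one per processed input and by $L-\ell-m\ge0$ per non-full merge (the displayed identity in the proof of Lemma~\ref{lem:target-correct}). By (ii) and (iii), $\Lambda$ never exceeds $p-1$ until a full component appears. Hence the number of inputs processed plus the sum of gains over non-full merges is at most $p-1$. The single merge or extension that creates $C_{\mathrm{full}}$ ends the main loop, and its cost is $O(p)$ because each component has length at most $p-1$. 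The number of merges is at most the number of singleton creations, hence at most $p-1$, since each merge decreases the number of components by one.

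Next I will strengthen the gain bound case by case, using the identity for $x,y>1$:
\[
L-\ell-m=(m-y)(x-1)+(\ell-x)(y-1)+x+y-2\ge \ell+m-2 ,
\]
where $\ell=\ell(T)$ and $m=\ell(C)$. For $y=1<x$ the gain is $m(x-1)\ge\max\{m,x-1\}$. For $x=1$ the gain is $m'(y-1)\ge0$ with $m'=\ell(T)$, where $m$ denotes $\ell(C)$ as before.

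\textbf{Writes and clears.} Each write creates one live owner entry. The number of live entries never exceeds $p-1$, so total writes are at most $p-1$ plus total clears, and it suffices to bound clears. A merge with $x=1$ clears nothing, since $h(T)\le s(T)=x-1=0$. A merge with $y=1$ clears $m\le$ gain entries. A merge with $x,y>1$ clears $h(T)+m\le(x-1)+m\le\ell+m\le$ gain $+2$. The remaining source of clears is an extension that becomes temporary; it clears $\ell(C)=\ell'-1$ entries. In that case $\Owner[\ell'd_k]$ is occupied, so \Alg{Normalize} immediately merges with $x=\ell'\ge2$. That merge has gain at least $x-1=\ell(C)$ (both when $y=1$ and when $y>1$), or it is the final full merge. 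So this clear is charged to the next merge, and each merge is charged only $O(1)$ times its gain plus $O(1)$.

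\textbf{Reads.} Each read in a scan either increments $s(T)$ or detects a collision. Collisions number at most the number of merges. For increments I use the potential $\Phi=\ell(T)-s(T)$, the number of unscanned coefficients of the current temporary component. Each increment decreases $\Phi$ by one. The potential is created with value $1$ for a singleton and $\ell'-(\ell'-1)=1$ for an extension. After a merge with $x=1$ it becomes $L-m=\ell(T)+\text{gain}$, so it rises by exactly the gain. After $y=1$, the new value $\ell+mx-x$ minus the old value $\ell-(x-1)$ is at most $mx\le 2\cdot\text{gain}+\dots$, which is $O(\text{gain})$ since $m\le$ gain and $x-1\le$ gain. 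After $x,y>1$ it becomes at most $L\le\ell+m+\text{gain}\le2\,\text{gain}+2$. Activation sets $\Phi$ to $0$. Summing, the total number of increments is $O(p)$.

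I expect the main obstacle to be making the charging airtight in the two places where large amounts of work appear without any directly attached gain. The first is the clearing of $C$'s $\ell(C)$ entries when an extension becomes temporary. The second is the wasted scanned prefix when a merge with $x,y>1$ resets $s(T)$ to $0$. The plan handles both through the inequality $L-\ell-m\ge\ell+m-2$ and the forced immediate merge with $x=\ell'\ge2$. The remaining point to check is that the full merge is charged only once, which holds because the loop stops as soon as $C_{\mathrm{full}}$ is set.
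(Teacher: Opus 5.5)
Your strategy is the paper's, and your bookkeeping is somewhat cleaner. You charge everything to the inputs and to the gains $\Delta=L-\ell-m$ of non-full merges, using $\Delta\ge\ell+m-2$ for $x,y>1$. You charge the $\ell'-1$ entries cleared by an extension that becomes temporary to the immediately following merge, which has $x=\ell'\ge2$. Where the paper gives two credits to each empty temporary coefficient, you bound writes by $(p-1)$ plus the number of clears, since every write hits an empty cell. You bound scan reads by the potential $\Phi=\ell(T)-s(T)$. This handles the $x=y=1$ merges automatically, because there $\Phi$ rises by exactly $\Delta=0$. Your remark that $x=1$ merges clear nothing, because $h(T)\le s(T)=0$, is also correct.

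One step is false, though. Invariants (ii) and (iii) do not give $\Lambda\le p-1$. Invariant (iii) covers only the scanned prefix $1,\dots,s(T)$, and unscanned coefficients of $T$ can coincide with active multiples; that is exactly what a collision is.

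Here is a concrete run with $p=11$ and $d_1,\dots,d_8=2,2,2,1,3,5,7,1$:
\begin{itemize}
\item After seven inputs, the active components are a direction-$2$ component $D$ with multiples $\{2,4,6\}$ and singletons at $1,3,5,7$.
\item The eighth input extends the direction-$1$ component to length $2$. Since $\Owner[2]=(D,1)$ is occupied, the extension becomes temporary and collides at $x=2$, $y=1$.
\item The resulting merge has $L=2+3\cdot2=8<10$, so it is non-full, yet $\Lambda=8+3=11>p-1$.
\item Only the next merge, at $x=3$ with the singleton $\{3\}$, is full.
\end{itemize}
So your claim that the inputs plus the non-full gains total at most $p-1$ is wrong.

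The repair is the paper's bound. Until a full component exists, the active lengths total at most $p-1$ by (ii), and a non-full temporary has length at most $p-2$. Hence $\Lambda\le 2p-3$, and the non-full gains still sum to $O(p)$. With this change the rest of your argument goes through. You should also account explicitly for the one owner read per input in the extension test and the single final lookup; both are trivially $O(p)$.
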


\begin{proof}
Recall that
\[
        \Lambda=\sum_{C\in\mathcal F}\ell(C).
\]
When $T$ is present, its retained prefix is already included in
$\ell(T)$.  As maintained by \Alg{PrimeTarget} and
\Alg{Normalize}, processing a new input index increases $\Lambda$ by
one, and a merge changes it by the difference between the new and old
component lengths.

Assign two units of credit to each positive coefficient of a temporary
component whose owner entry is empty: one for reading its owner entry
and one for writing that entry if the scan finishes.  A new singleton
receives this credit when its input is processed.  An extension whose new
owner entry is empty uses constant work for that entry.  After a merge,
the credits of the old temporary coefficients and any increase in
$\Lambda$ pay for the
remaining work and the credits assigned to the new temporary
coefficients.  The scan skips retained owner entries.

For a merge of components of lengths $\ell$ and $m$, write
$\Delta=L-\ell-m$.  If extending an active component of old length
$\ell_0$ exposes a collision, \Alg{PrimeTarget} clears its $\ell_0$
owner entries before the merge.
Here the temporary length is $\ell=\ell_0+1$ and the collision
coefficient is $x=\ell$.  For a non-full merge with $y=1$,
$\Delta=m(x-1)\ge\ell_0$; if $y>1$, then
$\Delta\ge\ell+m-2\ge\ell_0$.  Hence this clearing is charged to the
immediately following merge; the same charge covers the one possible
rewrite of those entries when the temporary component becomes active.
An extension that produces a full component clears at most $p-2$ owner
entries, and no further scan occurs.

Consider a non-full merge of a temporary component of length $\ell$ and
an active component of length $m$.

If $x,y>1$, the new length is
\[
        L=mx+\ell y-2(x-1)(y-1).
\]
The increase in length is
\[
        \Delta=L-\ell-m
        =(m-y)(x-1)+(\ell-x)(y-1)+x+y-2 .
\]
Since $\ell\ge x$ and $m\ge y$, this is positive.  Moreover
\[
        \Delta-(\ell+m-2)=(m-y)(x-2)+(\ell-x)(y-2)\ge0 .
\]
Thus the scan that finds the collision, deletion of the active
component's owner entries, later writing of the new owner entries,
and the credit assigned to the new temporary component cost
$O(\ell+m+L)=O(\Delta)$.  The reads performed before the collision are
already paid for by the corresponding credits; those credits are then
discarded.

If $x=1$ and $y>1$, the active component has direction $g$ and length
$m$, while the temporary component has direction $yg$.  The merged
component has direction $g$ and length $m+\ell y$, and retains the active
component's $m$ owner entries.  Clearing the retained owner entries of
the temporary component and processing the suffix $m+1,\ldots,m+\ell y$ costs
$O(\ell+\ell y)=O(\Delta)$, since $\Delta=\ell(y-1)$.

If $y=1$ and $x>1$, the temporary component has direction $g$, while the
active component has direction $xg$.  The merged component has direction
$g$ and length $\ell+mx$.  The scanned coefficients of the temporary
component remain valid, and their work is already charged.  Deleting the
active component's $m$ owner entries and processing the suffix of length
$mx$ costs $O(mx+m)=O(\Delta)$, since $\Delta=m(x-1)$.

If $x=y=1$, the total length is unchanged.  The active component's
owner entries are retained.  For $1\le j\le\ell$, suffix coefficient
$m+j$ uses temporary coefficient $j$, whose credit pays for the new
owner entry.  Thus repeated merges with $x=y=1$ charge distinct
coefficients.

Processing new input indices increases $\Lambda$ by one.  Merges with
$x=y=1$ preserve it, and every other non-full merge increases it by the
positive $\Delta$ computed above.  Until a full component is produced,
the active components cover pairwise disjoint sets of nonzero multiples,
of total size at most $p-1$; a temporary component, if present, has
length at most $p-2$.

Hence
$\Lambda<2p$ during calls to \Alg{Normalize}, so the sum of all positive
increases is $O(p)$.  If an
extension creates a full component or a merge has $L\ge p-1$, this is
the final update; its scan and deletions cost $O(p)$, after which the
owner and direction arrays may be discarded.  Thus all owner-array
work is linear.
\end{proof}

\begin{theorem}\label{thm:prime-target}
In the word RAM model specified in the Introduction,
\Alg{PrimeTarget} solves target subset sum over $\F_p$ in $O(p)$ word
operations and $O(p)$ words of working space.
\end{theorem}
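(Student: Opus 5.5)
Correctness is Lemma~\ref{lem:target-correct}, so the plan is to account for the remaining costs and show each is $O(p)$. The cost centers are: initialization of $\Owner$, $\Dir$ and the inverse table; the main loop's constant work per input index; owner-array reads and writes inside \Alg{Normalize} and in extensions; per-merge overhead (computing $g$, $F$, $L$, $\xi(E)$, and for $x,y>1$ the inverse $y^{-1}\bmod x$ by the extended Euclidean algorithm); and the final reconstruction of $\bigcup_{C\in\mathcal G}R_C(t_C)$. Initialization is $O(p)$ by the recurrence for $\operatorname{inv}_p$. Each of the $p-1$ iterations of the main loop does $O(1)$ work outside \Alg{Normalize}. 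Owner-array traffic is $O(p)$ by Lemma~\ref{lem:amortized}. All exact quantities are below $2p^2$, so under the stated word size each arithmetic step is constant time.

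For merges, every merge other than $x=y=1$ has $\Delta\ge 1$. The merges with $x=y=1$ are charged to distinct temporary coefficients, as in the proof of Lemma~\ref{lem:amortized}. So the number of merges is $O(p)$, and only the Euclidean calls need care. A call on $(x,y)$ takes $O(\log\min\{x,y\}+1)$ steps. When $x,y>1$, the proof of Lemma~\ref{lem:amortized} gives $\Delta\ge\ell+m-2\ge x+y-2\ge\log\min\{x,y\}$, so each call is paid by the length increase of its merge. Since $\Lambda<2p$ and the increases sum to $O(p)$, all Euclidean work is $O(p)$. I expect this charging of the gcd cost to $\Delta$ to be the main point to get right, but the bound $\Delta\ge x+y-2$ already makes it immediate.

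For reconstruction, I would traverse the forest from the roots in $\mathcal G$ with an explicit stack, so there is no recursion-depth issue. A leaf yields at most one index in $O(1)$. An extension node either passes $t$ to its child or passes $\ell(C)$ and outputs $k$. A merge node computes $\alpha(t),\beta(t)$ in $O(1)$ from the stored data and visits both children. The forest has $O(p)$ nodes: one leaf per input, one extension per input, and $O(p)$ merge nodes. Each node is visited at most once, because different roots and different children have disjoint index sets, i.e., disjoint subtrees. Hence reconstruction takes $O(p)$ time.

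For space, each node stores $O(1)$ words, the arrays have $p-1$ entries, and component records number $O(p)$, giving $O(p)$ words in total. Summing all the cost centers yields the theorem.
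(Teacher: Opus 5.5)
Your proposal is correct and follows the paper's proof step for step: correctness from Lemma~\ref{lem:target-correct}, owner-array work from Lemma~\ref{lem:amortized}, Euclidean cost charged to $\Delta\ge x+y-2$, and a traversal that visits each node of the reconstruction forest once. The paper bounds the number of merges more directly, since each merge reduces the number of reconstruction roots by one and so there are at most $p-2$ merges; it also explicitly notes that the single merge producing a full component adds one $O(\log p)$ Euclidean call that the $\Lambda<2p$ charging does not cover, a case you should mention too.
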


\begin{proof}
Correctness follows from Lemma~\ref{lem:target-correct}.  The field-inverse table
takes $O(p)$ word operations and $O(p)$ words of working space, and
Lemma~\ref{lem:amortized} bounds the owner-array work by $O(p)$.

Processing each $d_k$ creates one leaf or extension node.  A merge
decreases the number of current reconstruction-tree roots by one, so
after processing $k$ inputs there are
at most $k-1$ merge nodes.  Thus the algorithm performs $O(p)$ merges
and creates $O(p)$ reconstruction nodes.

Collisions with $x=1$ or $y=1$ require $O(1)$ arithmetic each and
therefore $O(p)$ time in total.
For $x,y>1$, the Euclidean algorithm computes $\gcd(x,y)$ and
$y^{-1}\pmod x$ in $O(\log\min\{x,y\})$ word operations.  For every
such merge that does not produce a full component, the length increase
$\Delta=L-\ell-m$ satisfies
\[
        \Delta\ge x+y-2,
\]
so its Euclidean cost is $O(\Delta)$.  The sum of the positive
increases in $\Lambda$ is $O(p)$, and a merge that produces a full
component contributes at most one additional $O(\log p)$ computation.
Thus all Euclidean
computations take $O(p)$ time.

The reconstruction nodes form a forest: an extension or merge places
each old root below a new root and removes the old root from the current
family.  Reconstruction visits each node below the final roots,
including those assigned coefficient $0$, at most once.  Leaf and extension
nodes take constant time, and a merge node computes its two child
coefficients in constant time from the stored data.  The reconstruction
therefore takes $O(p)$ time and $O(p)$ words of working space.

All remaining bookkeeping takes $O(p)$ word operations.  The arrays,
components, and reconstruction nodes together occupy $O(p)$ words.
\end{proof}

We can now solve the constructive EGZ problem for prime moduli.  We
denote the resulting algorithm by
\Alg{PrimeEGZ}$(p;a_1,\ldots,a_{2p-1})$.

\begin{theorem}\label{thm:prime-egz}
Given a prime $p$ and integers $a_1,\ldots,a_{2p-1}$,
\Alg{PrimeEGZ} finds $p$ input indices whose values sum to $0$ modulo
$p$ in $O(p)$ word operations and $O(p)$ words of working space.
\end{theorem}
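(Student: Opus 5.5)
We reduce to \Alg{PrimeTarget} via the standard sorting argument. First compute the residues $r_i=a_i\bmod p$ and sort the $2p-1$ indices by residue with a counting sort into $p$ buckets, which takes $O(p)$ time and space. If some residue occurs at least $p$ times, any $p$ indices from that bucket form the answer, since their sum is $p\,r\equiv0$.

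Otherwise every residue occurs at most $p-1$ times. List the indices in sorted order as $i_1,\ldots,i_{2p-1}$, so that $r_{i_1}\le\cdots\le r_{i_{2p-1}}$. Pair the indices as $(i_j,i_{j+p-1})$ for $j=1,\ldots,p-1$; the unpaired index is $i_{2p-1}$. Two positions $p-1$ apart in the sorted order cannot have the same residue, since that would give a run of $p$ equal residues. Hence
\[
  d_j=r_{i_{j+p-1}}-r_{i_j}\in\F_p^\times .
\]

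Next form the baseline sum
\[
  \sigma=r_{i_{2p-1}}+\sum_{j=1}^{p-1}r_{i_j}.
\]
This is the sum of $p$ chosen indices: the first element of every pair plus the unpaired one. Swapping the first element of pair $j$ for its second element adds $d_j$ to the sum and keeps the count at $p$. So we call \Alg{PrimeTarget}$(p;d_1,\ldots,d_{p-1};-\sigma)$. By Theorem~\ref{thm:prime-target}, it returns $J$ with $\sum_{j\in J}d_j=-\sigma$ in $O(p)$ word operations and $O(p)$ words of working space.

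The output set consists of $i_{2p-1}$, the index $i_{j+p-1}$ for each $j\in J$, and the index $i_j$ for each $j\notin J$. It has exactly $p$ distinct indices, and its residue sum is $\sigma+\sum_{j\in J}d_j\equiv0$.

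The only point needing care is the ``no residue $p$ times'' claim, which the argument above settles directly. The word-model requirement is also met: all residues are below $p$, and the integers inside \Alg{PrimeTarget} are bounded as already verified. Every step is $O(p)$, giving the theorem.
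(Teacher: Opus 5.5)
Your proof is correct and takes essentially the same approach as the paper: counting sort by residue, output $p$ indices with a common residue if one occurs $p$ times, and otherwise pair sorted positions at a fixed offset to get nonzero differences, then call \Alg{PrimeTarget} on the negated baseline sum and swap. The only difference is cosmetic: the paper pairs positions $i$ and $i+p$ with baseline $\{1,\ldots,p\}$, leaving position $p$ unpaired, whereas you pair at offset $p-1$ and leave the last position unpaired; in both cases the differences are nonzero for the same reason.
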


\begin{proof}
\Alg{PrimeEGZ} first reduces the input to $\F_p$ and sorts the indices
by residue using counting sort.  Let $\pi$ be the resulting permutation and put
$b_j=a_{\pi(j)}$ in $\F_p$.  If
$b_i=b_{i+p-1}$ for some $1\le i\le p$, then the $p$ indices
\[
        \{\pi(j):i\le j\le i+p-1\}
\]
have a common residue and therefore sum to $0$ modulo $p$.

Suppose that no such block exists.  Every residue occurs at most
$p-1$ times.  For
$1\le i\le p-1$, set $d_i=b_{i+p}-b_i$.  Since
$b_i\ne b_{i+p}$ in $\F_p$, every $d_i$ is nonzero.  Put
\[
        \tau=-\sum_{j=1}^p b_j\quad\text{in }\F_p.
\]
\pagebreak[3]
By Theorem~\ref{thm:prime-target},
\Alg{PrimeTarget}$(p;d_1,\ldots,d_{p-1};\tau)$ returns
$J\subseteq\{1,\ldots,p-1\}$ such that
$\sum_{i\in J}d_i=\tau$.  Starting from the index set
$\{\pi(1),\ldots,\pi(p)\}$, replace $\pi(i)$ by $\pi(i+p)$ for each
$i\in J$.  The resulting set has exactly $p$ original indices, with sum
\[
        \sum_{j=1}^p b_j+\sum_{i\in J}(b_{i+p}-b_i)
        \equiv \sum_{j=1}^p b_j+\tau
        \equiv 0\pmod p.
\]
Counting sort uses $O(p)$ word operations and $O(p)$ words of working
space, and Theorem~\ref{thm:prime-target} gives the same bounds for the
remaining work.
\end{proof}

\section{Composite moduli}

For arbitrary $n$, we use the product reduction from the original
proof~\cite{EGZ1961}.

\begin{proof}[Proof of Theorem~\ref{thm:main}]
We define \Alg{EGZ}$(n;a_1,\ldots,a_{2n-1})$ recursively.  If $n=1$,
it returns the only input index.  If $n$ is prime, it returns
\Alg{PrimeEGZ}$(n;a_1,\ldots,a_{2n-1})$.  Suppose that $n$ is
composite.  Choose a prime divisor $p$ and write $n=pq$.

Initialize a pool with all $2pq-1$ input indices.  For each
$h=1,\ldots,2q-1$, remove a set $Q_h$ of any $2p-1$ indices from the
pool and apply \Alg{PrimeEGZ} to the values indexed by $Q_h$.  Let
$B_h\subseteq Q_h$ be the resulting set of $p$ indices; their values
sum to a multiple of $p$.  Keep $B_h$
outside the pool, return $Q_h\setminus B_h$ to the pool, and define
\[
        c_h=\frac{1}{p}\sum_{i\in B_h}a_i .
\]
Let $H$ be the set of $q$ indices returned by
\Alg{EGZ}$(q;c_1,\ldots,c_{2q-1})$.
The sets $B_h$ are disjoint, and
$\sum_{h\in H}c_h\equiv0\pmod q$.  Hence
$\bigcup_{h\in H}B_h$ contains $pq=n$ original indices, and their sum is
\[
        p\sum_{h\in H}c_h,
\]
which is divisible by $pq$.

Each $c_h$ is an integer, and
$|c_h|\le\max_{i\in B_h}|a_i|$.  Thus recursive values satisfy the
same $O(1)$-word input assumption.

The pool always contains enough indices for the next call to
\Alg{PrimeEGZ}.  Immediately before iteration $h$ it has
\[
        2pq-1-(h-1)p
\]
indices, because every earlier iteration permanently removed exactly
$p$.  For $h\le2q-1$ this quantity is at least $2p-1$.

Let $T(n)$ be the running time.  Trial division determines whether $n$
is prime and, when it is composite, finds a prime divisor in
$O(\sqrt n)$ word operations.  This is included in the $O(n)$ work at
that recursion level.  The loop that constructs the blocks uses a
single pool of unused indices: each call takes $2p-1$ indices and
returns the unselected $p-1$.  Each call
performs $O(p)$ pool operations, and the $2q-1$ calls perform $O(pq)$
in total at this recursion level.  If $n=pq$ with $p$ prime, the
composite step makes $2q-1$ calls
to \Alg{PrimeEGZ} on modulus $p$, one recursive call on $q$, and
performs $O(pq)$ additional bookkeeping.  By Theorem~\ref{thm:prime-egz},
\[
        T(pq)=T(q)+O(pq).
\]
Since $q\le n/2$ for composite $n$, iterating the recurrence gives
\[
        T(n)=O\!\left(n+\frac n2+\frac n4+\cdots\right)=O(n).
\]
At a level of size $n$, the stored blocks contain
$p(2q-1)<2n$ indices, and the pool and one call to \Alg{PrimeEGZ} use
$O(n)$ more words of working space.  The working space for
\Alg{PrimeEGZ} is reused between calls.  If
$S(n)$ denotes peak working space, then
\[
        S(n)\le S(q)+O(n),
\]
which also sums geometrically because $q\le n/2$.  Thus the working
space is $O(n)$ words.
\end{proof}


\begin{thebibliography}{9}

\bibitem{CKL2022}
S. Choi, H. Kang, and D. Lim.
\newblock Simple deterministic $O(n\log n)$ algorithm finding a solution
of Erd\H{o}s--Ginzburg--Ziv theorem.
\newblock \emph{arXiv:2208.07728}, 2022.

\bibitem{DMM2009}
A. del Lungo, C. Marini, and E. Mori.
\newblock A polynomial-time algorithm for finding zero-sums.
\newblock \emph{Discrete Mathematics}, 309(9):2658--2662, 2009.
\newblock \href{https://doi.org/10.1016/j.disc.2008.06.018}
{doi:\,10.1016/j.disc.2008.06.018}.

\bibitem{EGZ1961}
P. Erd\H{o}s, A. Ginzburg, and A. Ziv.
\newblock Theorem in the additive number theory.
\newblock \emph{Bulletin of the Research Council of Israel, Section F:
Mathematics and Physics}, 10F:41--43, 1961.

\bibitem{Leung2025}
Y. H. A. Leung.
\newblock Finding a solution to the Erd\H{o}s--Ginzburg--Ziv theorem in
$O(n\log\log\log n)$ time.
\newblock \emph{arXiv:2507.08139}, 2025.

\bibitem{Nathanson1996}
M. B. Nathanson.
\newblock \emph{Additive Number Theory: Inverse Problems and the
Geometry of Sumsets}.
\newblock Graduate Texts in Mathematics 165, Springer, 1996.

\end{thebibliography}
\end{document}